  \def\cC{{\mathcal{C}}}
 \def\cN{{\mathcal{N}}}
\def\diag{\mathop{\mathrm{diag}}}
\def\bSigma{{\pmb{\Sigma}}} 
\def\bPhi{{\pmb{\Phi}}} \def\bphi{{\pmb{\phi}}}
\def\b0{{\pmb{0}}} 
\def\ba{{\mathbf{a}}}   
\def\bee{{\mathbf{e}}} \def\bff{{\mathbf{f}}} \def\bg{{\mathbf{g}}} \def\bh{{\mathbf{h}}}
\def\bm{{\mathbf{m}}}   
\def\bq{{\mathbf{q}}} \def\br{{\mathbf{r}}} \def\bs{{\mathbf{s}}} 
\def\bu{{\mathbf{u}}} \def\bv{{\mathbf{v}}}  
\def\by{{\mathbf{y}}}   
\def\bA{{\mathbf{A}}}   
\def\bI{{\mathbf{I}}}   
   \def\bT{{\mathbf{T}}}
\DeclarePairedDelimiter\norm{\lVert}{\rVert}
\newtheorem{lemma}{Lemma}
\newtheorem*{lemma*}{Lemma}
\begin{document}
	%

	\title{IRS-Aided Energy Efficient UAV Communication}

	%
	%
	%
	
	\author{Hyesang Cho,~\IEEEmembership{Student Member,~IEEE}	and~Junil Choi,~\IEEEmembership{Senior Member,~IEEE}
	\thanks{The authors are with the School of Electrical Engineering, Korea
		Advanced Institute of Science and Technology, Daejeon 34141, South Korea
		(e-mail: \{nanjohn96, junil\}@kaist.ac.kr).

	}
}

	\maketitle
	

\begin{abstract} \label{sec:abs} 
Unmanned aerial vehicles (UAVs) have steadily gained attention to overcome the harsh propagation loss and blockage issue of millimeter-wave communication. However, UAV communication systems suffer from energy consumption, which limits the flying time of UAVs. In this paper, we propose several UAV energy consumption minimization techniques through the aid of multiple intelligent reflecting surfaces (IRSs). In specific, we introduce a tractable model to effectively capture the characteristics of multiple IRSs and multiple user equipments (UEs). Then, we derive a closed form expression for the UE achievable rate, resulting in tractable optimization problems. Accordingly, we effectively solve the optimization problems by adopting the successive convex approximation technique. To compensate for the high complexity of the optimization problems, we propose a low complexity algorithm that has marginal performance loss. In the numerical results, we show that the proposed algorithms can save UAV energy consumption significantly compared to the benchmark with no IRSs, justifying that exploiting the IRSs is indeed favorable to UAV energy consumption minimization.
\end{abstract}


\begin{IEEEkeywords} \label{sec:key}
UAV communication, intelligent reflecting surface, energy consumption minimization, successive convex approximation 
\end{IEEEkeywords}




\section{Introduction}\label{sec:intro}

Current wireless communication systems are setting foot on millimeter-wave (mmWave) communication via fifth-generation technology \cite{mmWave1, mmWave2, mmWave3}. By exploiting the abundant and vacant frequency bands, mmWave communication has the potential to serve the increasing demand of data. Nonetheless, the high frequency region has its disadvantages, e.g., due to the characteristics of high frequency signals, the signals suffer from extreme propagation loss and are vulnerable to blockage \cite{mmPotential1, mmPotential2}. 


To overcome the blockage issue, intelligent reflecting surfaces (IRSs) have been proposed as a favorable candidate \cite{IRS1,IRS2,IRS3}. An IRS is a planar surface consisting of multiple passive elements, where each element has the ability to independently shift the phase of the electromagnetic waves impinging on itself \cite{IRSwork1, IRSwork2,IRSwork3,IRSwork4}. By thoroughly adjusting the phase controller, the IRS can enhance the signal gain for a desired location, resulting in a virtual line-of-sight (LoS) path. Therefore, the IRS can function as a controllable relay with minimal power consumption.

Another effective approach to overcome the blockage issue is to exploit unmanned aerial vehicles (UAVs) \cite{UAV1, UAV2,UAV3,UAV4}. A UAV is a flying platform that freely hovers in the sky, which can be used for communication by using it as a mobile base station (BS) or a relay \cite{UAVBS1,UAVBS2,UAVrelay1,UAVrelay2}. Unlike conventional ground nodes, the UAV can enjoy favorable channel conditions and avoid blockage via their high altitude.

Even with their potential, UAV communication systems have their limitations such as energy consumption \cite{UAVenergy1}. While traditional devices replenish energy from a dedicated source, it is improbable for the UAV to replenish energy due to its wireless feature. Also, due to its hovering characteristic, the UAV has additional power consumption, which is usually orders of magnitude larger than communication power \cite{UAVenergy2}. To address this issue, there have been works to maximize the energy efficiency of UAV communication systems. Especially, \cite{Bench1} minimized the energy consumption of a rotary-wing UAV communication system with a specific data constraint through jointly optimizing the UAV trajectory, velocity, and communication time.

Recently, there have been attempts to combine UAV communication systems with IRSs, which can be divided into two large categories \cite{UAVIRS1,UAVIRS2,UAVIRS3,UAVIRS4}. In the first category, the UAV itself is equipped with the IRS. In particular, \cite{UAVonIRS1} analyzed the outage probability, ergodic capacity, and energy efficiency  of wireless communication systems supported by the IRS equipped UAVs. In the second category, the UAV communication system is assisted through terrestrial IRSs. In \cite{UAVgroundIRS1}, it has been shown that exploiting a terrestrial IRS is beneficial for UAV energy consumption minimization. However, \cite{UAVgroundIRS1} assumed the LoS channel model and performed the task with only a single IRS. Also, \cite{UAVgroundIRS2} maximized the average reception power in a given time constraint using multiple IRSs by optimizing the IRS phase shifts, transmit beamformer, and UAV trajectory. While \cite{UAVgroundIRS2} discussed a multiple IRS scenario, it did not reveal the tradeoff between the reception power and UAV flight power, e.g., the UAV may consume excess flight power to maximize the reception power. Accordingly, this tempts us to explore UAV energy consumption minimization in a multiple terrestrial IRS scenario.

In this paper, we propose several UAV energy consumption minimization algorithms with a specific data constraint for various scenarios including the most general multiple IRS multiple UE (MIMU) case. The main insight is that by increasing the received data rate at the UEs through the IRSs, we can reduce the flight time, thus, reduce the UAV energy consumption. In specific, we first expand the probabilistic LoS model to capture the advantage of deploying IRSs on tall buildings \cite{UAVProbLOS}. To the best of our knowledge, this is the first work to consider the popular probabilistic LoS channel model for UAV scenarios in IRS assisted environments. By utilizing the new system model, we derive the optimal IRS phase shifts and develop a closed form of the UE achievable rate. We then formulate the optimization problems that jointly optimize the trajectory, flight speed, and communication time of the UAV. Afterwards, we transform the optimization problems into tractable forms exploiting slack variables and effectively solve them through the well known successive convex approximation (SCA) technique \cite{UAVIRS3}. Finally, we propose a low complexity algorithm that tries to follow the behavior of a specific optimization problem solution, resulting in comparable performance with respect to the jointly optimized results.

The rest of paper is organized as follows. Section \ref{sec:model} describes the channel model of the MIMU scenario and the power consumption model of the UAV. In Section \ref{sec: single IRS}, we derive and formulate the UAV energy consumption minimization problem for the single IRS case. In Section \ref{sec: multi IRS}, we generalize the system to the multiple IRS case, and introduce several algorithms to effectively solve the joint optimization problems. Section \ref{sec: simul} shows the simulation results of the proposed techniques with a benchmark scheme without any IRSs. Finally, Section \ref{sec:concl} concludes the paper. 

\textbf{Notation:} Lower and upper boldface letters represent column vectors and matrices. $\bA^{*}$  and $\bA^{\mathrm{H}}$ denotes the conjugate, and conjugate transpose of the matrix $\bA$. $\diag(\ba)$ returns the diagonal matrix with $\ba$ on its diagonal. $\left\{ \ba \right\}^b$ represents the set of length $b$ vectors with each element from $\ba$. ${\mathbb{C}}^{m \times n}$ and ${\mathbb{R}}^{m \times n}$ represent the set of all $m \times n$ complex and real matrices. $|{\cdot}|$ denotes the amplitude of the scalar, and $\norm{\cdot}$ represents the $\ell_2$-norm of the vector.  $\mathcal{O}$ denotes the Big-O notation. $\boldsymbol{0}_m$ is used for the $m\times1$ all zero vector, and $\bI_m$ denotes the $m \times m$ identity matrix. $\cC \cN(\bm,\bSigma)$ denotes the circularly symmetric complex Gaussian distribution with mean $\bm$ and variance $\bSigma$.

 \begin{figure}[t] 
	\centering
	\includegraphics[width=1 \columnwidth]{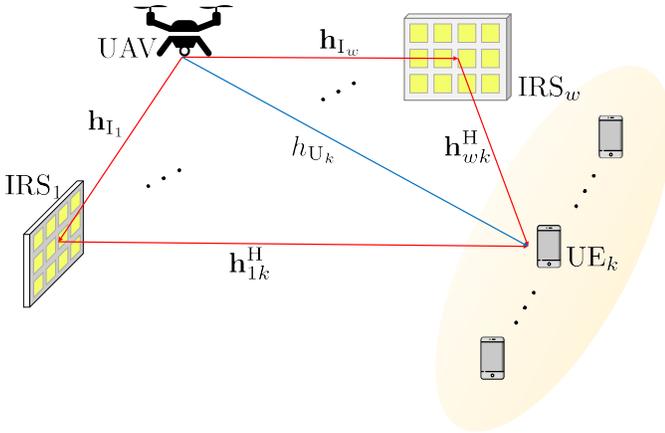}
	\caption{Single UAV downlink system with $K$ UEs and $W$~IRSs.} 
	\label{fig:System model}
\end{figure}

\section{System Model}\label{sec:model}
In this paper, we consider a wireless communication system where a rotary-wing UAV supports multiple UEs with the aid of multiple IRSs as in Fig. \ref{fig:System model}. We assume a single antenna UAV, $K$ single antenna UEs, and $W$ IRSs, with the $w$-th IRS having $M_w$ IRS elements. The UAV freely hovers around the horizontal plane, where the horizontal location at time $t$ is denoted as $\bq (t) \in \mathbb{R}^{2\times 1} $, and the altitude is fixed as $H_\text{A} \in \mathbb{R}$. Since the IRSs are typically located to gain a better LoS condition, we assume that the IRSs are installed on the outer walls of tall buildings, thus, having more height than the UEs. We assume that the $w$-th IRS is fixed with the horizontal location $\bq_{\text{I}_w} \in \mathbb{R}^{2\times 1}$ and height $H_{\text{I}_w} \in \mathbb{R}$. Also, we assume that the UEs are located without mobility with the horizontal location and height of the $k$-th UE denoted as $\bq_{\text{U}_k} \in \mathbb{R}^{2\times 1}$ and $H_{\text{U}_k} \in \mathbb{R}$, respectively.

\subsection{Channel Model}
The channel between any two nodes in the system, i.e., UAV, IRSs, and UEs, are denoted as
	\begin{align}\label{eq:channel}
	\tilde{\bh}_{\text{I}_w} (t) &= \sqrt{\frac{\beta_0}{d^{\alpha_{\text{I}_w}}_{\text{I}_w} (t)}} \tilde{\bg}_{\text{I}_w} (t), \\
	\tilde{\bh}_{wk} (t) &= \sqrt{\frac{\beta_0}{d^{\alpha_{wk}}_{wk} }} \tilde{\bg}_{wk} (t), \\
	\tilde{h}_{\text{U}_k} (t) &= \sqrt{\frac{\beta_0}{d^{\alpha_{\text{U}_k}}_{\text{U}_k} (t)}} \tilde{g}_{\text{U}_k} (t),
	\end{align}
where $\tilde{\bh}_{\text{I}_w} (t) \in \mathbb{C}^{M_w \times 1}, \tilde{\bh}_{wk}(t) \in \mathbb{C}^{ M_w \times 1}$, and $\tilde{h}_{\text{U}_k} (t) \in \mathbb{C}$ denote the UAV to the $w$-th IRS channel, the $w$-th IRS to the $k$-th UE channel, and the UAV to the $k$-th UE channel, respectively. We split the channel into the large scale fading factors and small scale fading channel, where for the large scale fading factors, the distances between the UAV to the $w$-th IRS, the $w$-th IRS to the $k$-th UE, and the UAV to the $k$-th UE are denoted as $d_{\text{I}_w}(t), d_{wk},$ and $d_{\text{U}_k}(t)$, respectively, where we observe that $d_{wk}$ is constant since the locations of UEs and IRSs are fixed. The pathloss exponent is denoted as $\alpha_i, \ i \in \left\{\text{I}_w, wk,\text{U}_k\right\}$, and $\beta_0$ is the pathloss at a reference distance. The small scale fading channels are denoted as
\begin{align}\label{eq:small scale channel}
\tilde{\bg}_{i} (t) &= \sqrt{\frac{\kappa_i}{\kappa_i +1}} \bee_i (t) + \sqrt{\frac{1}{\kappa_i +1}} \bar{\bg}_i (t), \ i \in  \left\{ \text{I}_w, wk\right\}, \\
\tilde{g}_{\text{U}_k} (t) &= \sqrt{\frac{\kappa_{\text{U}_k}}{\kappa_{\text{U}_k} +1}} \exp(j\psi_{\text{U}_k}) (t) + \sqrt{\frac{1}{\kappa_{\text{U}_k} +1}} \bar{g}_{\text{U}_k} (t),
\end{align}
where $\tilde{\bg}_{\text{I}_w} (t), \tilde{\bg}_{wk}(t)$, and $\tilde{g}_{\text{U}_k} (t)$ denote the small scale fading channels between the UAV to the $w$-th IRS, the $w$-th IRS to the $k$-th UE, and the UAV to the $k$-th UE, respectively. The vectors $\bee_{\text{I}_w} = \left[ \exp (j \psi_{\text{I}_{w,1}}), ... , \exp (j \psi_{\text{I}_{w,{M_w}}}) \right]^\text{T}$ and $\bee_{wk} = \left[ \exp (j \psi_{w,1k}), ... , \exp (j \psi_{w,{M_w}k}) \right]^\text{T}$ denote the LoS path phases with $\psi_{\text{I}_{w,m}}$ and $\psi_{w,mk}$ representing the phases of the channels between the UAV to the $m_w$-th IRS element of the $w$-th IRS and the $m_w$-th IRS element of the $w$-th IRS to the $k$-th UE, respectively. The LoS path phase of the channel between the UAV to the $k$-th UE is denoted as $\psi_{\text{U}_k}$.The small scale channels follow the independent Rician distribution with unit power and the $K$-factor as $\kappa_i, \ i \in \left\{\text{I}_w, wk,\text{U}_k\right\}$, and with $\bar{\bg}_{\text{I}_w}(t) \sim \cC\cN \left(\b0_{M_w},\bI_{M_w}\right), \bar{\bg}_{wk}(t) \sim \cC\cN \left(\b0_{M_w},\bI_{M_w}\right)$, and $\bar{g}_{\text{U}_k}(t) \sim \cC\cN \left(0,1\right)$.

We assume that the LoS paths between the IRSs and UEs always exist. For the channels associated with the UAV, we adopt the well-known probabilistic LoS model \cite{UAVProbLOS}, which models the LoS path existence probability as a variable dependent on the elevation angle between a node and a UAV. The LoS path existence probability is given as \cite{UAVProbLOS}
\begin{align} \label{eq: los prob}
p_i(t) = \frac{1}{1+a_i \exp \left( -b_i \left( \theta_i(t) -a_i \right) \right) }, \ i \in \left\{ \text{I}_w, \text{U}_k \right\},
\end{align}
where $a_i$ and $b_i$ are the design parameters dependent on the environment, and $\theta_i(t)$ is the elevation angle, which can be expressed as $\theta_i(t) = \frac{180}{\pi} \sin^{-1} \left( \frac{H_\text{A}- H_i}{d_i(t)}\right)$. Note that, due to the height of the installed IRSs, we consider the design parameters $a_i$ and $ b_i$ as variables. Defined in \cite{UAVProbLOS}, the design parameters are dependent on the average building height and building density with respect to the ground. Therefore, by taking the height of the installed IRSs on buildings as the new effective ground, the average building height and density which the IRSs see will decrease, resulting in a relatively rural environment with respect to the original environment. This new consideration can effectively capture the fact that highly located IRSs are favorable in obtaining the LoS path. Note that, the design parameters $a_i$ and $b_i$ can adapt to each IRS or UE, thus, this model can effectively consider the different heights of the IRSs or even consider the advantages of highly located UEs. Accordingly, the effective channel between the nodes can be expressed as
\begin{align}
\bh_{\text{I}_w} (t)&=
\begin{cases}
\tilde{\bh}_{\text{I}_w}(t), \quad  \  \text{Probability} \ p_{\text{I}_w}(t), \\
\nu_{\text{I}_w} \tilde{\bh}_{\text{I}_w}(t), \quad \text{Probability} \ 1-p_{\text{I}_w}(t),
\end{cases} \\
h_{\text{U}_k} (t)&=
\begin{cases}
\tilde{h}_{\text{U}_k}(t), \quad  \  \text{Probability} \ p_{\text{U}_k}(t), \\
\nu_{\text{U}_k} \tilde{h}_{\text{U}_k}(t), \quad \text{Probability} \ 1-p_{\text{U}_k}(t),
\end{cases}
\end{align}
where $\nu_i < 1, \ i \in \left\{ \text{I}_w, \text{U}_k\right\},$ is the additional attenuation factor due to the loss of the LoS path. From hereafter, we take $\nu_i = 0$ for equational conciseness, which will be explained in Section \ref{sec: single IRS}. Note that, the proposed methods can still be applied when $\nu_i \ne 0$. In conclusion, the overall channel for the $k$-th UE is given as
\begin{align}
\bar{h}_k (t) = \sum^W_{w=1}   \left( \bh_{wk}^{\text{H}}(t) \bPhi_w (t)\bh_{\text{I}_w}(t)\right) + h_{\text{U}_k}(t),
\end{align}
where $\bPhi_w $ is the phase control matrix of the $w$-th IRS. The phase control matrix can be represented as $\bPhi_w = \text{diag} \left( \bphi_w \right)$, with $\bphi_w = \left[ \exp(j \phi_{w,1}), ..., \exp ( j \phi_{w,{M_w}})\right] ^ \text{T}$, thus, the $m$-th IRS element of the $w$-th IRS will shift the phase of the impinging signal with the factor $\phi_{w,m}$.

We adopt the time-division multiple access (TDMA) technique at the UAV to serve the $K$ UEs to efficiently prevent inter-user interference. The achievable rate for the $k$-th UE can be denoted as 
\begin{align}
\bar{R}_k (t) = B \log_2 \left( 1+ \frac{P_c \vert \bar{h}_k(t) \vert^2}{B \sigma^2}\right),
\end{align}
where $B$ is the bandwidth, $P_c$ is the transmit power from the UAV, and $\sigma^2$ denotes the noise spectral density.

\subsection{UAV Power Consumption Model}
Since we assumed to use the rotary-wing UAV, we adopt the rotary-wing UAV flight power consumption model from \cite{Bench1}, which is given as
\begin{align}
P(V) &= P_0 \left( 1+ \frac{3V^2}{U_{\text{tip}}^2}\right) + P_i \left( \sqrt{1+ \frac{V^4}{4v_0^4}} - \frac{V^2}{2v_0^2}\right)^{1/2} \notag \\ 
&+ \frac{1}{2} d_0\rho sA_\text{p}V^3,
\end{align}
where $V$ is the UAV speed and the other variables are listed in Table \ref{table: UAVpower}. The total energy consumption of the UAV can be expressed as
\begin{align}
E_\text{tot} = \tau P_c+	\int_{0}^{T} P\left( V(t)\right) dt,
\end{align}
where $\tau$ is the communication time and $T$ is the total flight time of the UAV. We observe that the power consumption is dependent on the UAV speed, and that the model itself is quite complicated. This leads us to believe that the UAV speed should also be jointly optimized to effectively minimize the UAV power consumption. Also, we can confirm that while the power consumption for communication usually has the magnitude of $1$ W or smaller, the power consumption from UAV hovering has the magnitude of $100$ W. Thus, it is obvious that we must consider the UAV flight power to successively achieve energy efficient UAV communication systems.

\begin{table}[t]
\begin{center}
	\caption{UAV flight power consumption variables.}
	\label{table: UAVpower}
	\begin{tabular}{ |l|l|l|}
		\hline
		Variable & Description & Value \\
		\hline
		$P_0$   & Blade profile power & $79.86$ \\
		$P_i$& Induced power & $88.63$\\
		$U_\text{tip}$ & Rotor blade tip speed   & $120$ \\
		$v_0$ & Rotor induced velocity  & $4.03$ \\
		$d_0$&  Fuselage drage ratio &  $0.6$ \\
		$\rho$&  Air density  & $1.225$ \\
		$s$&  Motor solidity & $0.05$ \\
		$A_\text{p}$ &  Rotor disc area & $0.503$ \\
		\hline
	\end{tabular}
\end{center}
\end{table}

\section{Single IRS Optimization}\label{sec: single IRS}
In this section, we define and solve the optimization problems to minimize the UAV energy consumption in a simple single IRS case. Specifically, we first assume the most elementary single IRS single UE (SISU) case, where we derive the closed form expression for the IRS phase shifts and the achievable rate. Thereafter, we define the joint optimization problem and develop it into a tractable form. Next, we expand the scenario to the single IRS multiple UE (SIMU) case, which can be easily extended from the SISU case. The last expansion to the MIMU case will be held on Section \ref{sec: multi IRS}.

\subsection{SISU Scenario Achievable Rate}
In the SISU case, we omit the indexes $w$ and $k$ for brevity. To minimize the UAV energy consumption while sufficing a specific data constraint, it is trivial to operate the IRS phase shifts to maximize the achievable rate. Since the UAV and UE are both single antenna devices, the optimal phase shifts for the IRS are given in a simple closed form. Note that,  we need the instantaneous channel state information to derive the optimal phase shifts. To focus on the achievable theoretical performance, we assume to have perfect channel state information (CSI) throughout this paper. The reflection coefficient of the $m$-th IRS element is then given as
\begin{align} \label{eq: phase shift}
\phi_m (t) = \angle \left( h_\text{U} (t) \right) - \angle \left( h_{\text{I},m} (t) \right) + \angle \left( h_m (t)\right),
\end{align}
where $h_{\text{I},m} (t)$ and $ h_m (t)$ denote the $m$-th element of the channel vectors $\bh_\text{I}(t)$ and $\bh(t)$, respectively. This will coherently combine the UAV-UE channel with the UAV-IRS-UE channel, optimally maximizing the achievable rate. 

To gain tractability, we adopt two frequently used assumptions as in \cite{Bench1}. First, to capture the randomness of the channel, we use the expected achievable rate instead of the instantaneous achievable rate. Note that, this might seem as if perfect CSI is unnecessary. However, to obtain the expected achievable rate, we need the IRS to constantly adapt to the instantaneous channels for successful coherent combination, thus, the assumption of perfect CSI is necessary. Second, due to the involved form of \eqref{eq: los prob}, we assume that the LoS path existence probabilities are constant with respect to time, e.g., fix the probabilities with the average elevation angles. Nevertheless, the simplified model still captures the advantage of the highly located IRSs.

To effectively express the probabilistic LoS model, we define a state variable as 
\begin{align}
\bs = \left[ s_\text{U}, s_\text{I} \right]^{\text{T}}, \ s_i \in \left\{ 0, 1\right\},
\end{align}
for $i \in \left\{ \text{U}, \text{I}\right\}$, where $s_\text{U}$ and $s_\text{I}$ denote the states of the UAV-UE path and UAV-IRS path, respectively, and $s_i=1$ represents the existence of the selected path. With this variable, we can upper bound the expected achievable rate using the Jensen's inequality as
\begin{align}
&\mathbb{E} \left[ \bar{R}(t)\right]\notag \\
 &= \sum_{\bs \in \left\{ 0,1\right\}^2} \prod_i p_i^{s_i} \left(1-p_i\right)^{1-s_i} \mathbb{E} \left[ B \log_2 \left( 1+ \frac{P_c \vert \bar{h}_{\bs} (t) \vert ^2}{B\sigma^2}\right)\right]\\
&\leq \sum_{\bs \in \left\{ 0,1\right\}^2} \prod_i p_i^{s_i} \left(1-p_i\right)^{1-s_i} B\log_2 \left( 1+ \frac{P_c\mathbb{E} \left[ \vert \bar{h}_{\bs}(t) \vert ^2 \right]}{B\sigma^2}\right) \label{eq: jensen}\\
&=\hat{R}(t), \  i \in \left\{ \text{U}, \text{I}\right\}, \notag
\end{align}
where $\bar{h}_\bs (t)$ is the overall channel between the UAV and the UE. Due to the channel states, the overall channel is now a variable dependent on $\bs$. We observe that the closed form of \eqref{eq: jensen} can be derived by computing $\mathbb{E} \left[ \vert \bar{h}_{\bs}(t) \vert ^2 \right]$. However,  $\mathbb{E} \left[ \vert \bar{h}_{\bs} (t) \vert ^2 \right]$ is difficult to derive due to the joint consideration of the UAV-UE channel and the UAV-IRS-UE channel.


\begin{figure}[t] 
	\centering
	\includegraphics[width=1 \columnwidth]{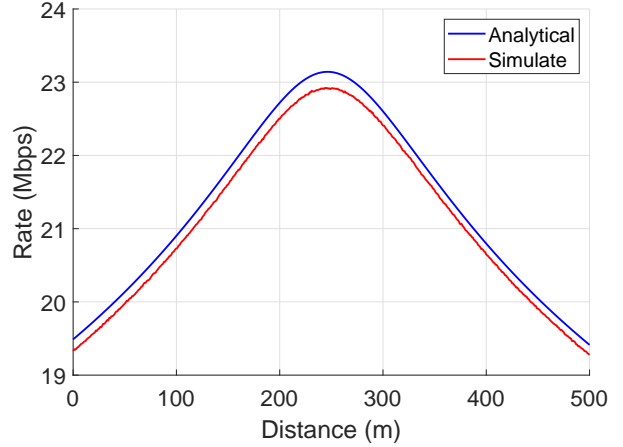}
	\caption{Achievable rate comparison between the approximated analytical values and the simulated results for the SISU case. The UE is located at $250$ m and the IRS is located at $245$ m.} 
	\label{fig:Rate compare}
\end{figure}

To compute the expected channel gain, we first assume that the IRS phases are always coherently combining the UAV-UE channel and the UAV-IRS-UE channel. In result, the overall channel norm value is given as
\begin{align} \label{eq: channel const}
	\lvert \bar{h}_\bs (t) \rvert =s_\text{U} \lvert  h_\text{U} (t)\rvert +s_\text{I} \sum^M_{m=1} \lvert h_{\text{I},m}(t) \rvert \lvert h_m (t)\rvert.
\end{align}
By defining a new variable $C(t) = \sum^M_{m=1}  \lvert h_{\text{I},m}(t) \rvert \lvert h_m (t)\rvert $, we can expand the expected channel gain as
\begin{align} \label{eq: channel gain}
\mathbb{E} \left[ \vert \bar{h}_{\bs} (t) \vert ^2 \right] &= s_\text{U}^2 \mathbb{E} \left[ \lvert h_\text{U}(t) \rvert^2 \right] \notag \\
&+ 2 s_\text{U} s_\text{I} \mathbb{E} \left[ \lvert h_\text{U} (t)\rvert \right] \mathbb{E} \left[ C(t) \right] + s_\text{I}^2 \mathbb{E} \left[ C^2(t) \right],
\end{align}
where $\lvert h_\text{U} (t)\rvert C(t)$ can be separated due to independence. Note that $C(t)$ is the sum of $M$ i.i.d. variables. Considering that the number of IRS elements, i.e., $M$, is predicted to be installed in large quantities, we employ the central limit theorem and assume that $C(t)$ follows the Gaussian distribution. Overall, \eqref{eq: jensen} can be shown as
\begin{align}  \label{eq: single IRS rate}
\hat{R}(t) &= \sum_{\bs \in \left\{ 0,1\right\}^2} \prod_i p_i^{s_i} \left(1-p_i\right)^{1-s_i} \log_2 \left( 1+ \hat{\gamma}_\bs (t)\right), 
\end{align}
with the overall signal-to-noise ratio (SNR) in the closed form expression as
\begin{align} 
\hat{\gamma}_\bs(t) &= s_\text{U} \gamma_\text{U}^2 d_\text{U}^{-\alpha_\text{U}} (t) + 2s_\text{U} s_\text{I} \gamma_\text{U}' \gamma_\text{I}' d_\text{U}^{-\alpha_\text{U}/2} (t)d_\text{I}^{-\alpha_\text{I}/2}(t) \notag \\ &+ s_\text{I} \gamma_\text{I}^2 d_\text{I}^{-\alpha_\text{I}}(t),
\end{align}
where $s_i^2$ is denoted as $s_i$ since $s_i^2 = s_i$, and the time independent variables $\gamma_i$ and  $\gamma_i'$ are defined as
\begin{align}
\gamma_\text{U} &= \sqrt{\frac{\beta_0 P_c}{B\sigma^2}}, \label{eq: gam1} \\
\gamma_\text{U}' &= \sqrt{\frac{\beta_0 P_c}{B\sigma^2}} \mu \left(\kappa_\text{U}\right), \\
\gamma_\text{I} &= \sqrt{\frac{\beta_0^2 P_c}{d^\alpha B\sigma^2}} \sqrt{\mu_\text{I}^2+1}, \\
\gamma_\text{I}' &= \sqrt{\frac{\beta_0^2 P_c}{d^\alpha B\sigma^2}} \mu_\text{I}, \label{eq: gam4}
\end{align}
where $\mu\left(\kappa_i\right)$ denotes the mean of the Rician distribution with the $K$-factor as $\kappa_i$, and $\mu_\text{I} = M \mu \left( \kappa_\text{I}\right) \mu \left( \kappa\right)$. Note that in \eqref{eq: channel const}, the channel terms without LoS paths are zero due to the assumption $\nu_i = 0$, where the case $\nu_i \ne 0$ gives only constant differences in \eqref{eq: gam1}-\eqref{eq: gam4}.

In Fig. \ref{fig:Rate compare}, we plot the approximated achievable rate and the simulated achievable rate for the SISU case with the same parameters in Section \ref{sec: simul}. We observe that the approximated achievable rate is tightly bound with the simulated achievable rate. Thus, hereinafter, we consider $\hat{R}(t)$ as the achievable rate of the UE.

\subsection{SISU Scenario Problem Formulation}
In this subsection, we formulate the optimization problem for the SISU case. Since optimizing with respect to continuous time will result in an infinite number of variables, we adopt the path discretization technique as in \cite{Bench1}, where the trajectory is discretized into small segments. By segmenting the trajectory into $N$ pieces, the UAV energy consumption minimization problem can be formulated as 
\begin{align}
\text{(P1):} \min_{\bT,\boldsymbol{\tau},\bq} &\sum^N_{n=1} T_nP(V_n)+\tau_nP_c \notag \\
\text{s.t.} \  &\bq[1] = \bq_0,\ \bq[N+1] = \bq_\text{F}, \label{eq: initial, f}\\
&\tau_n \leq T_n, \ \tau_n \geq 0, \label{eq: flight}\\
&\Delta_n \leq \min \{\Delta_{\max},V_{\max}T_n\},\\
&\sum^N_{n=1}\tau_n\hat{R}_n \geq Q, \label{eq:rate}
\end{align}
where $T_n, \tau_n, V_n, \bq \left[n \right]$ and $\hat{R}_n$ are the UAV flight time, UAV transmission time, UAV speed, UAV horizontal location, and achievable rate at the $n$-th segment, respectively. We assume the UAV has fixed initial and final locations $\bq_0$ and $\bq_\text{F}$, respectively, as in \eqref{eq: initial, f}. The transmission time is trivially bounded by the flight time as in \eqref{eq: flight}, and we assume that the UE has a fixed data constraint $Q$ as in \eqref{eq:rate}. One characteristic of the path discretization technique is it assumes that the channels are constant within a segment. To uphold this assumption, for the segment length $\Delta_n = \norm{\bq [n+1] - \bq[n]}$, we bound the length with $\min \{\Delta_{\max},V_{\max}T_n\}$, where $\Delta_{\max}$ is the maximum length of a segment to conserve this assumption, and $V_{\max}$ is the maximum speed of the UAV. 

Similar to (P4) in \cite{Bench1}, we transform (P1) into an equivalent problem as
\begin{align}
\text{(P2):} \min_{\bA, \by, \boldsymbol{\tau}, \bT, \bq, \bu, \bv} &\sum^N_{n=1} \tau_nP_c + P_0\left(T_n + \frac{\delta_b \Delta_n^2}{T_n} \right) + \delta_p\frac{\Delta_n^3}{T_n^2} \notag \\
&+ P_iy_n   \notag \\
\text{s.t.} \  &\bq[1] = \bq_0,\ \bq[N+1] = \bq_\text{F},\\
&\Delta_n \leq \min \{\Delta_{\max},V_{\max}T_n\},\\
&\tau_n \leq T_n, \ \tau_n \geq 0, \\
&d_{\text{I},n} \leq u_n, \ d_{\text{U},n} \leq v_n, \\
&Q \leq \sum^N_{n=1} A_n^2,\label{eq: const0}\\
&\frac{A_n^2}{\tau_n} \leq R_n, \label{eq: const1}\\
&\frac{T_n^4}{y_n^2} \leq y_n^2 + \frac{\Delta_n^2}{v_0^2}, \label{eq:const2}
\end{align}
where $\delta_b = 3U_{\text{tip}}^{-2}$, and $\delta_p = \frac{1}{2} d_0 \rho s A_\text{p}$. The transformation with regard to slack variables $A_n$ and $y_n$, which are the $n$-th elements of $\bA$ and $\by$, are equivalent as in \cite{Bench1}, thus, we omit the transformation details. The slack variables $u_n$ and $v_n$ are the $n$-th elements of the vectors $\bu$ and $\bv$, where they upper bound the UAV-IRS distance and the UAV-UE distance, respectively, which are given as 
\begin{align}
	d_{\text{I},n} &= \lVert[\bq^\text{T}[n],H_\text{A}] - [\bq_{\text{I}}^\text{T}, H_\text{I}] \rVert, \\
	d_{\text{U},n} &= \lVert[\bq^\text{T}[n],H_\text{A}] - [\bq_{\text{U}}^\text{T}, H_\text{U}] \rVert.
\end{align}
 Also, $R_n$ is the achievable rate with the slack variables $u_n$ and $v_n$ instead of $d_{\text{I},n}$ and $d_{\text{U},n}$, respectively, which can be expressed as
 \begin{align}  \label{eq: single IRS rate in opt form}
 R_n &= \sum_{\bs \in \left\{ 0,1\right\}^2} \prod_i p_i^{s_i} \left(1-p_i\right)^{1-s_i} \log_2 \left( 1+ \gamma_{\bs,n}\right), 
  \\
 \gamma_{\bs,n} &= s_\text{U} \gamma_\text{U}^2 v_n^{-\alpha_\text{U}} + 2s_\text{U} s_\text{I} \gamma_\text{U}' \gamma_\text{I}' v_n^{-\alpha_\text{U}/2}u_n^{-\alpha_\text{I}/2} \notag \\ &+ s_\text{I} \gamma_\text{I}^2 u_n^{-\alpha_\text{I}}.
 \end{align}
We observe that by considering the IRS, the achievable rate has an involved form compared to the achievable rate with no IRSs, which makes the analysis difficult. To the best of our knowledge, this is the first paper to acknowledge the achievable rate for a general number of IRSs.

While the objective function is a convex function, the constraints \eqref{eq: const0}-\eqref{eq:const2} do not follow the standard convex optimization problem form. To resolve this issue, we adopt the well known SCA technique. By deriving a global bound of a constraint that meets specific conditions, the SCA technique iteratively solves the optimization problem and guarantees to converge to a KKT condition point. One way to obtain the global lower bound satisfying the specific conditions is to apply the first-order Taylor expansion to a convex function on a local point. The right hand side of the constraints \eqref{eq: const0} and \eqref{eq:const2} have been shown to be convex in \cite{Bench1}, while $R_n$ from \eqref{eq: const1}, which is defined in \eqref{eq: single IRS rate in opt form}, needs to be analyzed. Therefore, we propose the following lemma.
\begin{lemma} \label{lemma 1}
 	For given non-negative constants $\epsilon_z,  \zeta_z,  \alpha_z$, $z \in \{1,2,..., Z\}$, the function given as 
 	\begin{align}
&f(u_1, ..., u_Z)\notag \\
&=\log_2 \left( 1+ \sum_{z=1}^{Z} \epsilon_z u_z^{-\alpha_z} + \sum_{z = 1}^{Z} \sum_{i \ne z }^Z \zeta_z \zeta_i u_z^{-\alpha_z/2}u_i^{-\alpha_i/2}  \right),
 	\end{align}
  is convex with respect to $u_z >0, z \in \{1,2,..., Z\}$ for any non-negative integer $Z$. 
\end{lemma}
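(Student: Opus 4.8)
The plan is to recognize that this is really a \emph{log-convexity} statement in disguise. Writing $f = \tfrac{1}{\ln 2}\ln g$ with
\[
g(u_1,\dots,u_Z) = 1 + \sum_{z=1}^{Z}\epsilon_z u_z^{-\alpha_z} + \sum_{z=1}^{Z}\sum_{i\ne z}^{Z}\zeta_z\zeta_i\, u_z^{-\alpha_z/2}u_i^{-\alpha_i/2},
\]
and noting that $1/\ln 2 > 0$, the convexity of $f$ is equivalent to the convexity of $\ln g$, i.e.\ to $g$ being log-convex on the positive orthant $\{u_z>0\}$. So I would reduce the entire claim to proving that $g$ is a log-convex function of $(u_1,\dots,u_Z)$. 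It is worth emphasizing at the outset that showing $g$ to be merely convex would \emph{not} suffice, since the logarithm of a convex function need not be convex; the structure that is actually present, and that makes the argument work, is log-convexity.

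Next I would exhibit $g$ as a finite sum of manifestly log-convex summands. The constant $1$ is a positive constant and hence log-convex. Each diagonal summand $\epsilon_z u_z^{-\alpha_z}$ and each cross summand $\zeta_z\zeta_i\,u_z^{-\alpha_z/2}u_i^{-\alpha_i/2}$ is, after discarding the identically-zero terms that arise when a coefficient vanishes, a monomial of the form $c\prod_{z} u_z^{-\beta_z}$ with $c>0$ and every exponent $\beta_z\ge 0$. Its logarithm is $\ln c - \sum_z \beta_z \ln u_z$, whose Hessian is diagonal with nonnegative entries $\beta_z/u_z^2\ge 0$ and is therefore positive semidefinite; so each such monomial is log-convex. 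Consequently $g$ is a sum of log-convex functions, and the presence of the $+1$ keeps $g$ strictly positive throughout.

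The concluding and genuinely nontrivial step, which I expect to be the main obstacle to state cleanly, is the closure of log-convexity under addition: if $g_1,\dots,g_L$ are log-convex, then so is $\sum_\ell g_\ell$. This does not follow from elementary differentiation and instead rests on H\"older's inequality. The argument I would record is the standard two-term reduction: for log-convex $g_1,g_2$ and $\theta\in(0,1)$, the defining inequalities give
\[
g_1\big(\theta\mathbf{x}+(1-\theta)\mathbf{y}\big) + g_2\big(\theta\mathbf{x}+(1-\theta)\mathbf{y}\big) \le g_1(\mathbf{x})^{\theta}g_1(\mathbf{y})^{1-\theta} + g_2(\mathbf{x})^{\theta}g_2(\mathbf{y})^{1-\theta},
\]
and then H\"older's inequality with conjugate exponents $1/\theta$ and $1/(1-\theta)$ bounds the right-hand side by $\big(g_1(\mathbf{x})+g_2(\mathbf{x})\big)^{\theta}\big(g_1(\mathbf{y})+g_2(\mathbf{y})\big)^{1-\theta}$, which is exactly the log-convexity inequality for $g_1+g_2$; induction extends this to any finite sum. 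Applying this closure property to the summands identified above shows that $g$ is log-convex, hence $\ln g$ and therefore $f=\tfrac{1}{\ln 2}\ln g$ is convex in $u_z>0$ for every non-negative integer $Z$, which completes the proof.
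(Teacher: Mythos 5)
Your proof is correct and takes essentially the same route as the paper's Appendix~A: both reduce convexity of $f$ to log-convexity of the argument of the logarithm, decompose it into manifestly log-convex summands (the constant $1$, the terms $\epsilon_z u_z^{-\alpha_z}$, and the cross terms $\zeta_z\zeta_i u_z^{-\alpha_z/2}u_i^{-\alpha_i/2}$), and invoke closure of log-convexity under addition. The only difference is that you spell out the H\"older-inequality proof of that closure property, which the paper simply cites as a known fact.
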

\begin{proof}
	The proof is given in Appendix A.
\end{proof}
Since $p_i$ and $(1-p_i)$ are non-negative constants, $R_n$ is a weighted linear sum of the function $f(u_1, ..., u_Z)$ with different values of $Z$, thus, through Lemma \ref{lemma 1}, we confirm that $R_n$ is indeed a convex function.  Due to the convexity of $R_n$, we can successfully use the first-order Taylor expansion of $R_n$ as the lower bound. 

Adopting the lower bound of the constraints \eqref{eq: const0}-\eqref{eq:const2}, the overall optimization problem in the $\ell$-th iteration can be expressed as
\begin{align}
\text{(P3):} \min_{\bA, \by, \boldsymbol{\tau}, \bT, \bq, \bu, \bv} &\sum^N_{n=1} \tau_nP_c + P_0\left(T_n + \frac{\delta_b \Delta_n^2}{T_n} \right) + \delta_p\frac{\Delta_n^3}{T_n^2} \notag \\
&+ P_iy_n   \notag \\
\text{s.t.} \  &\bq[1] = \bq_0,\ \bq[N+1] = \bq_\text{F},\\
&\Delta_n \leq \min \{\Delta_{\max},V_{\max}T_n\},\\
&\tau_n \leq T_n, \ \tau_n \geq 0, \\
&d_{\text{I},n} \leq u_n, \ d_{\text{U},n} \leq v_n, \\
&Q \leq \sum_{n=1}^N \tilde{A}^{(\ell)}_n, \ \frac{A_n^2}{\tau_n} \leq \tilde{R}^{(\ell)}_n, \\
&\frac{T_n^4}{y_n^2} \leq \tilde{Y}^{(\ell)}_n,
\end{align}
where the lower bound functions are defined as
\begin{align}
	\tilde{A}_n^{(\ell)} &= A_n^{(\ell)2} + 2A_n^{(\ell)} \left( A_n - A_n^{(\ell)}\right), \\
	\tilde{Y}_n^{(\ell)} &= y_n^{(\ell)2} + 2y_n^{(\ell)} \left( y_n - y_n^{(\ell)}\right)- \frac{\Delta_n^{(\ell)2}}{v_0^2} \notag \\
	&+ \frac{2}{v_0^2}( \bq^{(\ell)}[n+1] -\bq^{(\ell)}[n] )^{\text{T}}\left( \bq[n+1] -\bq[n] \right), \\
	\tilde{R}_n^{(\ell)} &= R_n^{(\ell)} + \nabla R_n^{(\ell) \text{T}} \left(
	\begin{bmatrix}
	u_n - u_n^{(\ell)}\\
	v_n - v_n^{(\ell)}
	\end{bmatrix}
	\right).
\end{align}
The values $A_n^{(\ell)}, y_n^{(\ell)}, \Delta_n^{(\ell)}, \bq^{(\ell)}[n], u_n^{(\ell)}, v_n^{(\ell)},$ and $R_n^{(\ell)}$ are the local points to obtain the lower bound, which are the results of the $\left(\ell -1\right)$-th iteration, and the gradient $ \nabla R_n$ can be calculated through the result of Appendix B. Note that, for the first iteration, we must define an initial case to compute the problem (P3). We will specify the initial case in Section~\ref{sec: simul}.

\begin{algorithm}[t]
	\begin{algorithmic} [1]
		\caption{Pseudo-code for the SISU scenario UAV energy consumption minimization algorithm}
		\State \textbf{Initialization:} Set $A_n^{(0)}, y_n^{(0)}, \Delta_n^{(0)}, \bq^{(0)}[n], u_{n}^{(0)}, v_{n}^{(0)},$ and $R_{n}^{(0)}$. Let $\ell = 0$.
		\Repeat
		\State Set $\ell = \ell +1$.
		\State Solve the optimization problem (P3).
		\State Update the solution of (P3) as $A_n^{(\ell)}, y_n^{(\ell)}, \Delta_n^{(\ell)},$ $\bq^{(\ell)}[n], u_{n}^{(\ell)}, v_{n}^{(\ell)},$ and $R_{n}^{(\ell)}$.
		\Until \ The UAV energy consumption $E_\text{tot}$ decreases by a fraction below a predefined threshold.
	\end{algorithmic}
\end{algorithm} 

Since the lower bounded constraints are all affine functions, the problem (P3)  is now in the standard convex optimization problem form, thus, can be solved effectively via problem solving tools such as CVX \cite{CVX1}. By iteratively solving the optimization problem and updating the local points, the solutions of (P3) will be in the feasible region of the original problem (P2), which converges to a KKT condition point. The proposed optimization technique is summarized in Algorithm 1. Note that, since the optimization problem is from the upper bound in \eqref{eq: jensen}, the resulting solution may not satisfy the data constraint in reality. To settle this issue, we can give a margin to the UE constraint $Q$, e.g., $Q+\chi$ for some positive constant $\chi$. Thus, by restraining the constraint, it will have a larger probability to satisfy the constraint in practice.

\subsection{SIMU Scenario Problem Formulation}
In this subsection, we formulate the optimization problem to minimize the UAV energy consumption in a SIMU scenario. However, this can be performed very easily with the similar expression as in \cite{Bench1}. The resulting optimization problem is given as
\begin{align}
\text{(P4):} \min_{\bT,\boldsymbol{\tau},\bq} &\sum_{n=1}^N \left\{T_nP(V_n)+\sum_{k=1}^K \tau_{nk}P_c \right\}\\
\text{s.t.} \  &\bq[1] = \bq_0,\ \bq[N+1] = \bq_\text{F},\\
&\sum_{k=1}^K \tau_{nk} \leq T_n, \ \tau_{nk} \geq 0, \label{eq: tdma} \\
&\sum^N_{n=1}\tau_{nk}\hat{R}_{nk} \geq Q_k, \label{eq: UE rates}\\
&\Delta_n \leq \min \{\Delta_{\max},V_{\max}T_n\},
\end{align}
where $\tau_{nk}$ and $\hat{R}_{nk}$ are the transmit time with the $k$-th UE at the $n$-th segment and the achievable rate for the $k$-th UE at the $n$-th segment, respectively, and $Q_k$ is the data constraint for the $k$-th UE.   Since we assume that the channels are approximately constant within a segment, the TDMA assumption is legitimate with the constraint \eqref{eq: tdma}. With the multiple UE expansion, we observe that the transmission for all UEs must be considered in each flight time as in \eqref{eq: tdma}, and the data constraint for each UE must be sufficed independently as in \eqref{eq: UE rates}. We neglect the transformation process of (P4) due to redundancy. However, (P4) will be a subset within the general MIMU case, thus, can be effectively solved with the latter solutions in Section \ref{sec: multi IRS}.

\section{Multiple IRS Optimization}\label{sec: multi IRS}
In this section, we consider the most general scenario of MIMU case. To minimize the UAV energy consumption, we propose two optimization problems to effectively minimize the UAV energy consumption with similar performance. In addition, we propose an algorithm to mimic the behavior of a specific optimization problem to achieve low UAV energy consumption with minimal complexity.

\subsection{General IRS Usage Case}
In the MIMU scenario, the most straightforward approach to exploit the multiple IRSs is to use all of them simultaneously. We denote this approach as the general IRS usage case. In order to realize this concept, we must define the UE achievable rate with regard to multiple IRSs. Fortunately, due to TDMA, only one UE is considered in a time instance, thus,
the phase shifts of the IRSs can be easily obtained with the same approach as \eqref{eq: phase shift}. Assuming that the IRSs concentrate on the $k$-th UE, the achievable rate is obtained as
\begin{align}
	\hat{R}_k(t) &= \sum_{\bs \in \left\{ 0,1\right\}^{W+1}} \prod_i p_i^{s_i} \left(1-p_i\right)^{1-s_i} \log_2 \left( 1+ \hat{\gamma}_\bs (t)\right), \\
	& i \in \{\text{U}_k, \text{I}_1, ... , \text{I}_W \}, \notag
\end{align}
with the state vector as $\bs = \left[ s_{\text{U}_k}, s_{\text{I}_1}, ... , s_{\text{I}_W} \right]^{\text{T}}$, and the SNR as
\begin{align}
\hat{\gamma}_\bs(t) &= \sum_{i} s_i \gamma_i^2 d_i^{-\alpha_i} (t) \notag \\
&+ \sum_{i} \sum_{j \ne i } s_i s_j \gamma_i' \gamma_j' d_i^{-\alpha_i/2} (t)d_j^{-\alpha_j/2}(t),
\end{align}
where $\gamma_i$ and $\gamma_i'$ follow the same definitions as \eqref{eq: gam1}-\eqref{eq: gam4}.

In result, the optimization problem will have the same expression as (P4), and the problem can be transformed in the equivalent form as
\begin{align}
\text{(P5):} \min_{\bA, \by, \boldsymbol{\tau}, \bT, \bq, \bu, \bv} &\sum^N_{n=1} \sum_{k=1}^{K} \tau_{nk}P_c + P_0\left(T_n + \frac{\delta_b \Delta_n^2}{T_n} \right)\notag \\
 &+ \delta_p\frac{\Delta_n^3}{T_n^2} + P_iy_n   \notag \\
\text{s.t.} \  &\bq[1] = \bq_0,\ \bq[N+1] = \bq_\text{F},\\
&\Delta_n \leq \min \{\Delta_{\max},V_{\max}T_n\},\\
&\sum_{k=1}^K \tau_{nk} \leq T_n, \ \tau_{nk} \geq 0, \\
&d_{\text{I}_w,n} \leq u_{nw}, \ d_{\text{U}_k,n} \leq v_{nk}, \\
&Q_k \leq \sum^N_{n=1} A_{nk}^2,\\
&\frac{A_{nk}^2}{\tau_{nk}} \leq R_{nk}, \\
&\frac{T_n^4}{y_n^2} \leq y_n^2 + \frac{\Delta_n^2}{v_0^2},
\end{align}
where we observe that the slack variables $v_{nk}, A_{nk}$, and $u_{nw}$ have an additional index due to the MIMU generalization. Using Lemma \ref{lemma 1}, we have shown that $R_{nk}$ is a convex function with respect to $u_{nw}$ and $v_{nk}$ for a general number of IRSs. Thus, we can effectively solve (P5) with the same method as (P2). 

\subsection{IRS Matching Case}
While exploiting all the IRSs simultaneously might be the most straightforward method, it may not be the most probable solution. Considering that the IRS is a communication assisting object, the IRSs are likely to be located far apart. Thus, due to the double propagation characteristic of the IRS, the power from the IRSs that are far apart from a UE is likely to be negligible. Also, the assumption that every IRS is controlled simultaneously for a single UE may be implausible. Thus, the resulting solutions may be less practical.  

Accordingly, we consider a more practical case that each UE is supported by only one IRS, where we denote it as the IRS matching case. The formulated optimization problem will effectively evaluate and decide which IRS to use per instance. The overall problem can be expressed as
\begin{align}
\text{(P6):}\min_{\bT,\boldsymbol{\tau},\bq} &\sum_{n=1}^N \left\{T_nP(V_n)+\sum_{k=1}^K \tau_{nk}P_c \right\}\\
\text{s.t.} \  &\bq[1] = \bq_0,\ \bq[N+1] = \bq_\text{F},\\
&\sum_{k=1}^K \tau_{nk} \leq T_n, \ \tau_{nk} \geq 0, \\
&\sum_{n=1}^N\tau_{nk}\max_w \hat{R}_{nwk} \geq Q_k, \label{eq: max in} \\
&\Delta_n \leq \min \{\Delta_{\max},V_{\max}T_n\},
\end{align}
 where $\hat{R}_{nwk}$ represents the achievable rate of the $k$-th UE at the $n$-th segment when the UE is matched with the $w$-th IRS. Hence, with the maximum function in \eqref{eq: max in}, we pick the best IRS per instance for communication. Since we exploit only a single IRS, $\hat{R}_{nwk}$ is equivalent to \eqref{eq: single IRS rate} by considering the signals from only the $w$-th IRS, where we assume the signals from other IRSs are negligible. While we can still adopt the SCA technique due to the convexity of $\max_w \hat{R}_{nwk}$, the gradient is quite difficult to compute. Instead, we introduce a matching variable $\eta_{nwk}$, where it represents the communication time between the $k$-th UE in the $n$-th segment using the $w$-th IRS. The overall problem is given as
\begin{align}
\text{(P7):} \min_{\bT,\boldsymbol{\tau},\bq, \boldsymbol{\eta}} &\sum_{n=1}^N \left\{T_nP(V_n)+\sum_k \tau_{nk}P_c \right\}\\
\text{s.t.} \  &\bq[1] = \bq_0,\ \bq[N+1] = \bq_\text{F},\\
&\sum_{k=1}^K \tau_{nk} \leq T_n, \ \tau_{nk} \geq 0, \\
&\sum_{w=1}^W \eta_{nwk} \leq \tau_{nk}, \ \eta_{nwk} \geq 0, \label{eq: IRS match}\\
&\sum_{n,w}\eta_{nwk}\hat{R}_{nwk} \geq Q_k, \label{eq: IRS select} \\
&\Delta_n \leq \min \{\Delta_{\max},V_{\max}T_n\},
\end{align}
where with the additional constraints \eqref{eq: IRS match} and \eqref{eq: IRS select}, the solution will effectively force only one $\eta_{nwk}$, which is associated with the largest $\hat{R}_{nwk}$, to $\tau_{nk}$, and the other $\eta_{nw'k}$ values to zero. By adopting the matching variable $\eta_{nwk}$, we successfully removed the maximum operation in \eqref{eq: max in}, and the gradient of $\hat{R}_{nwk}$ can be calculated through the results from the previous section since $\hat{R}_{nwk}$ has the same expression as \eqref{eq: single IRS rate}. As we can observe, (P7) is the same as (P4) with the additional linear constraint \eqref{eq: IRS match}, thus, we can solve the problem equivalently through the SCA technique. Note that, by choosing a single IRS for each instance, the IRS matching case can also compensate for the rather strong assumption that the LoS path between the IRSs and UEs always exist, since IRSs closely located to a UE will have a high probability to have an LoS path. Thus, the IRS matching case is more practical than the general IRS usage case.

\subsection{Low Complexity Algorithm}
In this subsection, we propose a low complexity algorithm that attempts to follow the behavior of the optimized solutions described in the previous subsection. Through the simulation results in Section \ref{sec: simul}, we observe several factors. First, the IRS matching case has marginal performance loss than the general IRS usage case. Second, when the data constraint is small, the UAV directly goes to the final location with speed $V_e$, where $V_e$ is the speed with the minimum energy consumption per meter, i.e., $V_e = \min_V P(V)/V$. Finally, when the data constraint is large, the UAV tends to communicate at fixed locations. Accordingly, to actively adapt to these characteristics, we introduce the following low complexity algorithm.

\begin{algorithm}[t]
	\begin{algorithmic} [1]
		\caption{Pseudo-code for the low complexity UAV energy consumption minimization algorithm}
		\State Pair each UE to its closest IRS
		\State Find $\hat{\bq}_k$ and $\bar{\bq}_k$.
		\State Through the toy trajectory, find $\bff$.
		\If {f_k \geq 1, \  \forall k,}
		\State Set the toy trajectory as the solution.
		\Else
		\State Compute \eqref{eq:find star}.
		\State \ \ \ \ Solve the TSP of $\left\{\bq_k^{\star}\right\}$ to find the shortest trajectory.
		\State \ \ \ \ Set the shortest trajectory as the solution.
		\EndIf
		\State \textbf{end if}
	\end{algorithmic}
\end{algorithm} 

For this algorithm, we pair each UE with the closest IRS to mimic the IRS matching case, given the fact that the IRS matching case can suffice considerable performance. Before we specify the trajectory, we define several auxiliary variables. Through a one dimensional search between each UE and its paired IRS, we find the largest achievable rate location for each UE, denoted as $\hat{\bq}_k$. Afterwards, we draw a straight line from $\bq_0$ to $\bq_\text{F}$, denoted as a toy trajectory. Next, we find the closest point from $\hat{\bq}_k$ to the toy trajectory as $\bar{\bq}_k$ by landing a vertical line on the toy trajectory from $\hat{\bq}_k$. Using these variables, we will acquire the trajectory for the UAV.

First, we simulate a toy scenario which follows the toy trajectory, moves with speed $V_e$, and performs equal time transmission for each UE at every segment, i.e., $\tau_{nk}= T_n/K$. From this toy scenario, we can compute the transmitted data as
\begin{align}
\br = \left[ f_1Q_1, ..., f_KQ_K\right]^{\text{T}},
\end{align} 
where $f_k$ is the fraction of data the UAV has transmitted with respect to the $k$-th UE with data constraint $Q_k$. We can see that the fraction $f_k$ indirectly implies the magnitude of the data constraint and distance between the toy trajectory and the UE, e.g., $f_k$ will be small if $Q_k$ is too large or the achievable rate is too small due to the large UE distance. Note that, if $f_k \geq 1, \  \forall k \in \left\{1,2,...,K\right\}$, we can simply select the toy scenario as the algorithm solution. Otherwise, we fix the transmit locations as
\begin{align} \label{eq:find star}
	\bq^{\star}_k = \hat{\bq}_k + g_k(\bff) \left(\bar{\bq}_k - \hat{\bq}_k\right),
\end{align}
where $\bq^{\star}_k$ denotes the transmit location for the $k$-th UE, $\bff = \left[ f_1, ..., f_K\right]^\text{T}$, and $g_k(\bff)$ is a well defined function to choose the internal division point between $\bar{\bq}_k$ and $\hat{\bq}_k$ using $\bff$. With the transmit locations $\{\bq^{\star}_k\}$, we solve the traveling salesman problem (TSP) to find the shortest trajectory \cite{Bench1}. Finally, the UAV moves with speed $V_e$ and the transmission takes place while the UAV is moving, and while the UAV is at the hovering locations. In specific, we first calculate the amount of transmitted data while the UAV is moving, and the remaining data is transmitted while hovering at the fixed locations $\{\bq_k^\star\}$. In conclusion, the UAV will fly straight to $\bq_\text{F}$ for small data constraints, and will fly to all the designated transmit locations $\{ \bq^{\star}_k\}$ in straight lines for large data constraints. In this paper, we define $g_k(\bff) = \min(f_k,1)$. While determining the optimal $g_k (\bff)$ is itself another area of study, we consider it outside the scope of this paper. Nevertheless, the simple $g_k(\bff)$ is shown to have considerable performance in Section \ref{sec: simul}. The overall algorithm is summarized in Algorithm 2.

\section{Simulation Results} \label{sec: simul}
In this section, we verify and compare the performance of the proposed algorithms. For simulations, the altitude and heights are fixed as $H_\text{A} = 100$ m, $H_{\text{I}_w} = 20$ m, and $H_{\text{U}_k} = 0$, and the number of IRS elements is fixed as $M_w = 500$. For the probabilistic LoS model, the parameters are fixed as $a_\text{U} = 30$, $b_\text{U} = 0.15$, $a_\text{I} = 15$, $b_\text{I} = 0.18$, $\theta_\text{U} = 60^\circ$, and $\theta_\text{I} = 54.2^\circ$. For the channel model, the coefficients are fixed as $\beta_0 = 0.01$, $\alpha_\text{I} = 2.2$, $\alpha_\text{U} = 2.5$, $\alpha = 3$, $\kappa_\text{I} = 30$, $\kappa_\text{U}=10$, $\kappa = 5$, $\sigma^2 = -174$ dBm, and $B = 1$ MHz. For the communication scenario, the parameters are fixed as $\bq_0 = [0,0]^\text{T}$, $\bq_\text{F} = [100,100] ^\text{T}$, $\Delta_{\max} = 1$ m, $V_e = 18.3$ m/s, $V_{\max} = 30$ m/s, and we assume that $Q_k$ is the same for all UEs for simplicity. We construct a simple solution, i.e., hover and transmit at the UE locations, and use it for the initial case for the algorithms adopting the SCA technique. In the figures, the UEs are denoted as black circles and the IRSs are denoted as black diamonds.

Additionally, we simulate the no IRS case as a benchmark by setting $p_{\text{I}_w}= 0$, which will have the same result as \cite{Bench1}. While other UAV trajectory optimization techniques exist for a multiple IRS scenario, there were no results considering UAV energy consumption. Thus, we only adopt the no IRS case as a benchmark.

\begin{figure}[t] 
	\centering
	\includegraphics[width=1 \columnwidth]{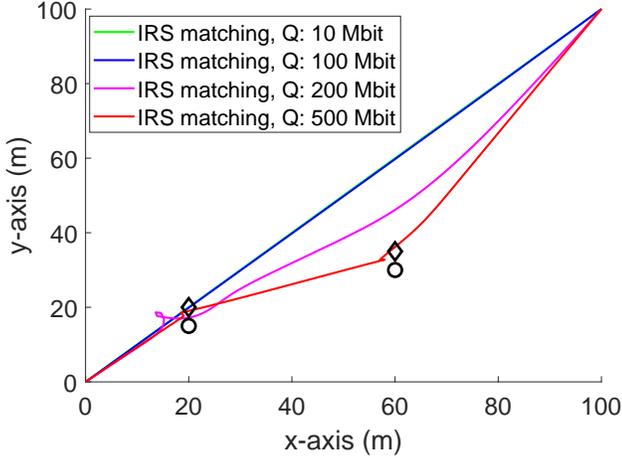}
	\caption{UAV trajectories for the IRS matching case for different data constraints.} 
	\label{fig:About matching}
\end{figure}

In Fig. \ref{fig:About matching}, we observe the UAV trajectory for the IRS matching case with different data constraints. The figure shows that the UAV trajectory follows a straight line for a small data constraint, and steadily shifts towards the UE locations as the data constraint increases. These results follow our intuition. When the data constraint is small, the UAV can easily satisfy the data constraint from a far distance, thus, it uses the minimum flight power by simply flying on a straight line. As the data constraint increases, the UAV must find a balance between the energy consumption due to flying and the increasing achievable rate from shorter distance between the UAV and UEs. Thus, as the rate constraint increases, there is an increasing need for high achievable rate, resulting in a trajectory shift towards the UEs. Note that, we observe some unnatural behavior in some of the trajectories such as loops. This phenomenon seems to occur due to the UAV power consumption model, i.e., the UAV power consumption is not minimized at $V=0$, therefore; the UAV may need to keep moving near the UEs to reduce its power consumption unless the data constraint is too high. Note also that the solution of the optimization problem only converges to a local optimum since the problem is not in the standard convex optimization form.

\begin{figure}[t] 
	\centering
	\includegraphics[width=1 \columnwidth]{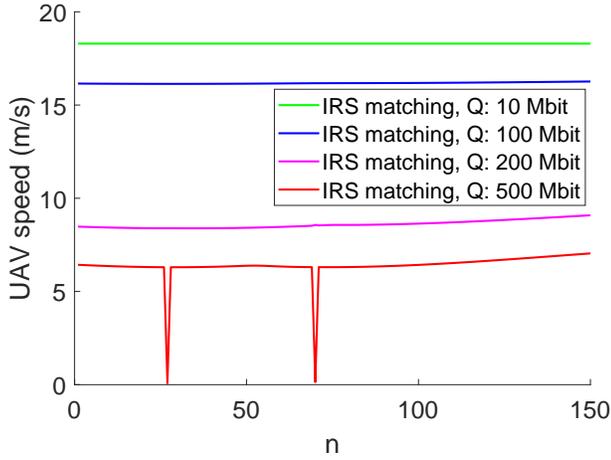}
	\caption{UAV flight speed with respect to the path segment $n$ for the IRS matching case with different data constraints.} 
	\label{fig:AboutV}
\end{figure}

\begin{figure}[t] 
	\centering
	\includegraphics[width=1 \columnwidth]{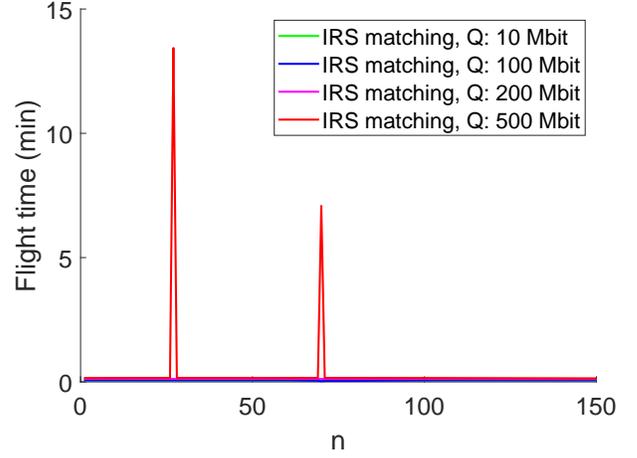}
	\caption{UAV flight time with respect to the path segment $n$ for the IRS matching case with different data constraints.} 
	\label{fig:AboutT}
\end{figure}

In Figs. \ref{fig:AboutV} and \ref{fig:AboutT}, we plot the UAV speed and UAV flight time with respect to the path discretization segment $n$ for the IRS matching case with the scenario equivalent to Fig. \ref{fig:About matching}. Note that, the flight time for the small data constraint cases have minimal flight time, thus, are all shown at the bottom of Fig. \ref{fig:AboutT}. We first observe that the UAV speed for the lowest data constraint case is a constant with value $V_e$. This is expected, since, to reduce UAV energy consumption, the UAV must choose not only the shortest trajectory, but also the optimal speed for energy efficiency. Also, by jointly observing the speed and flight time, we find out that as the data constraint increases, the UAV has the tendency to slow down and communicate on a fixed location, which is the basis of our proposed low complexity algorithm.

\begin{figure}[t] 
	\centering
	\includegraphics[width=1 \columnwidth]{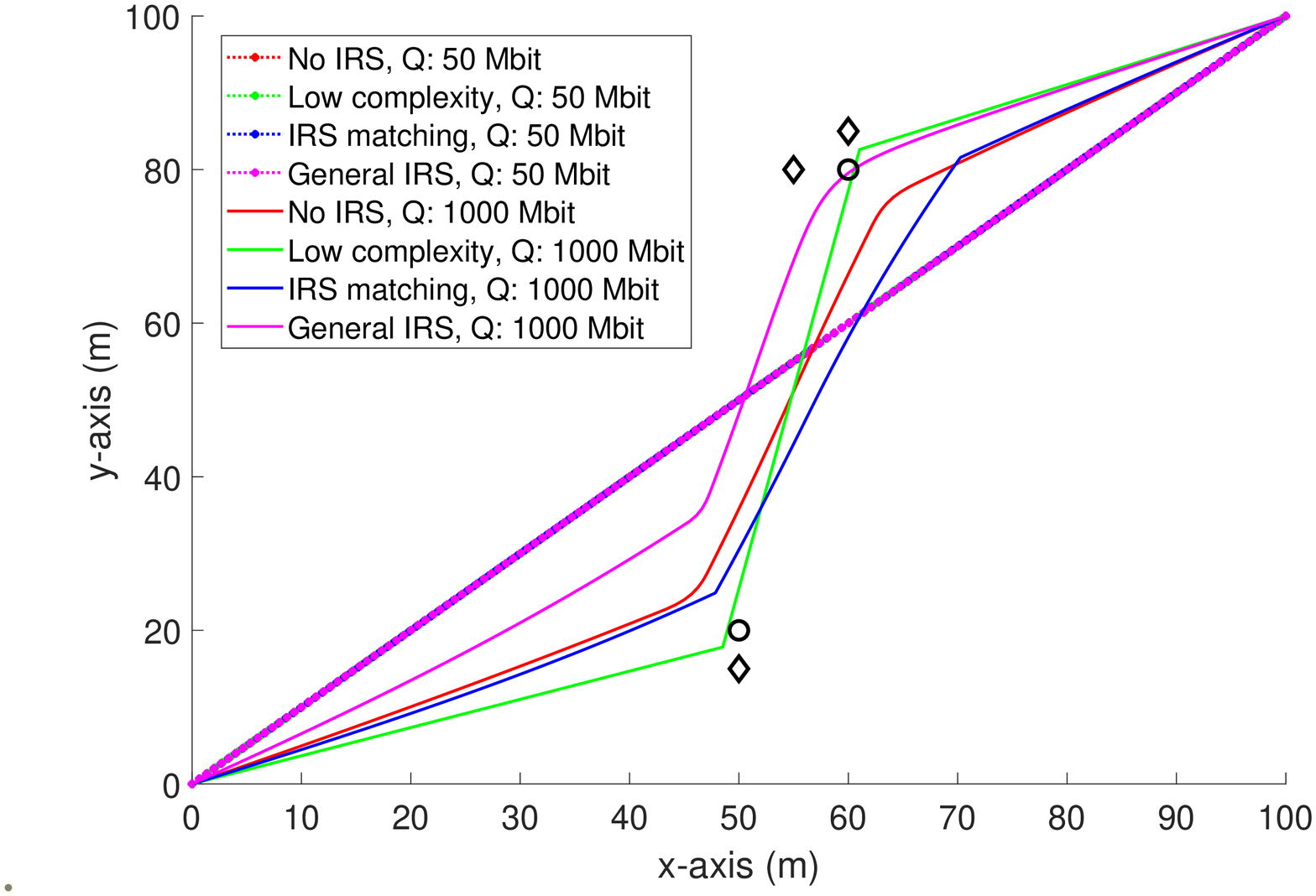}
	\caption{UAV trajectories in extreme data constraints.} 
	\label{fig:AboutAlg}
\end{figure}

In Fig. \ref{fig:AboutAlg}, we plot the trajectory of the UAV for the various proposed techniques in two extreme data constraints. For the low data constraint, we observe that all the techniques converge to the same trajectory, confirming that the algorithms work in a consistent manner for low data constraints. Also, it confirms that the proposed low complexity algorithm adapts well to the low data constraint. For the high data constraint, we observe that the trajectories behave in a more involved manner, confirming that proper adjustment is needed for different communication requirements.

\begin{figure}[t] 
	\centering
	\includegraphics[width=1 \columnwidth]{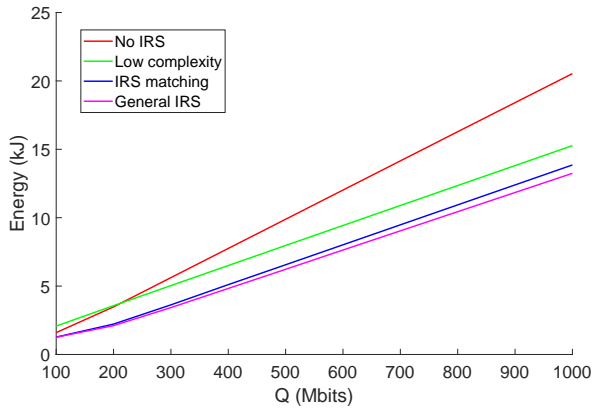}
	\caption{UAV energy consumption with respect to the data constraint.} 
	\label{fig:AboutEE}
\end{figure}

In Fig. \ref{fig:AboutEE}, we plot the UAV energy consumption for different techniques with respect to the data constraint for the scenario equivalent to Fig. \ref{fig:AboutAlg}. Note that, since one UE has two IRSs in close proximity, the results will be beneficial to the general IRS usage case. As expected, the most prominent result is that the proposed optimization problem solutions have outstanding performance with respect to the no IRS case, confirming that exploiting the IRS for UAV energy consumption minimization is a legitimate approach. Also, the general IRS usage case works as a lower bound for any data constraint. This result is expected since the general IRS usage case exploits all the IRSs in the system, thus, will have the best performance. We also observe that the performance of the IRS matching case declines faster than the general IRS usage case. This is due to the composition of the UAV energy consumption model. While for small data constraints, the UAV energy consumption is dominated by the flying power to the final location, which is inevitable. However, as the data constraint increases, the UAV hovers in a fixed location for communication, thus, the energy consumption from hovering for communication becomes dominant. Since the general IRS usage case will have the largest achievable rate, the energy consumption of hovering for communication will be generally smaller, resulting in better performance than the IRS matching case. Finally, we observe the performance of the proposed low complexity algorithm. In the low data constraint region, the low complexity algorithm has some performance loss due to the constant speed of the UAV. However, the low complexity algorithm has marginal performance loss with respect to the IRS matching case in the high data constraint region. Indeed, we have confirmed that with extremely high data constraints, the performance of the low complexity case converges to the IRS matching case. This confirms that the proposed low complexity algorithm works sufficiently well in the high data constraint region with a simply defined $g_k(\bff)$, while it can be further improved by optimizing the speed of UAV for the low data constraint region, which is an interesting future research topic.

\section{Conclusion} \label{sec:concl}
In this paper, we proposed several techniques to minimize the UAV energy consumption in a general MIMU scenario. We successfully developed the probabilistic LoS channel model to ensure the advantages of highly located IRSs. Through some approximations, we transformed and derived the new system into a tractable form, which could be solved by the SCA technique. Employing various approaches, we derived an algorithm that exploits all the IRSs simultaneously and an algorithm that actively selects the best IRS for transmission. Using the results, we proposed a low complexity algorithm that follows its performance. By comparison with the no IRS case, we confirmed that exploiting the IRSs in UAV communication systems is indeed favorable to minimize the UAV energy consumption.

\section*{Appendix A \\ Proof of Lemma 1}
Note that the functions
\begin{align}
	\log_2 \left( \epsilon_z u_z^{-\alpha_z}\right) &=-\alpha_z \log_2\left( u_z\right) + \log_2 \left( \epsilon_z \right), \\
\log_2 \left( \zeta_z \zeta_i u_z^{-\alpha_z/2} u_i^{-\alpha_i/2}\right) &=-\frac{\alpha_z}{2} \log_2\left( u_z\right)  \notag \\
\quad  &-\frac{\alpha_i}{2}\log_2\left( u_i\right) 
+\log_2 \left( \zeta_z \right) \notag \\
&+\log_2 \left( \zeta_i \right),
\end{align}
are convex with respect to $u_z$ and $u_i$, since the logarithmic function is concave and $\alpha_z \geq 0$. Thus, the functions $\epsilon_z u_z^{-\alpha_z}$ and $ \zeta_z \zeta_i u_z^{-\alpha_z/2} u_i^{-\alpha_i/2}$ are log-convex functions. With the same approach, we can also find out that a constant value is also a log-convex function.

By exploiting the fact that the sum of log-convex functions is log-convex, we see that the function $f(u_1, ..., u_Z)$ is the sum of log-convex functions, which finishes the proof.

\section*{Appendix B}
For the function 
\begin{align}
&f(u_1, ..., u_Z)   \notag \\
&=\log_2 \left( 1+ \sum_{z=1}^{Z} \epsilon_z u_z^{-\alpha_z} + \sum_{z = 1}^{Z} \sum_{i \ne z}^Z \zeta_z \zeta_i u_z^{-\alpha_z/2}u_i^{-\alpha_i/2}  \right)  \notag \\
&= \log_2 \left( g(u_1, ..., u_Z)\right),
\end{align}
the derivative with respect to $u_z$ is given as
\begin{align}
&\frac{df}{du_z} =\frac{-\alpha_z \epsilon_z u_z^{-\alpha_z -1} - \sum_{i \ne z}^Z -\alpha_z \zeta_z \zeta_i u_z^{-\alpha_z/2 -1} u_i^{-\alpha_i/2}}{g(u_1,...,u_Z) \ln 2}.
\end{align}
In result, the derivative of $f(u_1,...,u_Z)$ is expressed as
\begin{align}
	\nabla f = \left[ \frac{df}{du_1}, ... , \frac{df}{du_Z} \right] ^\text{T},
\end{align}
which finishes the derivation.

\section*{Acknowledgement}
The authors thank S. Kang and B. Ko at KAIST for helping with the derivation of Appendix A and designing Fig. \ref{fig:System model}.

\bibliographystyle{IEEEtran}
\bibliography{HyebibUAV}

\begin{thebibliography}{10}
\providecommand{\url}[1]{#1}
\csname url@samestyle\endcsname
\providecommand{\newblock}{\relax}
\providecommand{\bibinfo}[2]{#2}
\providecommand{\BIBentrySTDinterwordspacing}{\spaceskip=0pt\relax}
\providecommand{\BIBentryALTinterwordstretchfactor}{4}
\providecommand{\BIBentryALTinterwordspacing}{\spaceskip=\fontdimen2\font plus
\BIBentryALTinterwordstretchfactor\fontdimen3\font minus
  \fontdimen4\font\relax}
\providecommand{\BIBforeignlanguage}[2]{{%
\expandafter\ifx\csname l@#1\endcsname\relax
\typeout{** WARNING: IEEEtran.bst: No hyphenation pattern has been}%
\typeout{** loaded for the language `#1'. Using the pattern for}%
\typeout{** the default language instead.}%
\else
\language=\csname l@#1\endcsname
\fi
#2}}
\providecommand{\BIBdecl}{\relax}
\BIBdecl

\bibitem{mmWave1}
F.~Boccardi, R.~W. Heath, A.~Lozano, T.~L. Marzetta, and P.~Popovski, ``{Five
  Disruptive Technology Directions for 5G},'' \emph{IEEE Communications
  Magazine}, vol.~52, no.~2, pp. 74--80, Feb. 2014.

\bibitem{mmWave2}
T.~S. Rappaport, S.~Sun, R.~Mayzus, H.~Zhao, Y.~Azar, K.~Wang, G.~N. Wong,
  J.~K. Schulz, M.~Samimi, and F.~Gutierrez, ``{Millimeter Wave Mobile
  Communications for 5G Cellular: It Will Work!}'' \emph{IEEE Access}, vol.~1,
  pp. 335--349, May 2013.

\bibitem{mmWave3}
W.~Roh, J.~Seol, J.~Park, B.~Lee, J.~Lee, Y.~Kim, J.~Cho, K.~Cheun, and
  F.~Aryanfar, ``{Millimeter-wave Beamforming as an Enabling Technology for 5G
  Cellular Communications: Theoretical Feasibility and Prototype Results},''
  \emph{IEEE Communications Magazine}, vol.~52, no.~2, pp. 106--113, Feb. 2014.

\bibitem{mmPotential1}
Y.~Niu, Y.~Li, D.~Jin, L.~Su, and A.~V. Vasilakos, ``{A Survey of Millimeter
  Wave Communications (mmWave) for 5G: Opportunities and Challenges},''
  \emph{Wireless Networks}, vol.~21, pp. 2657--2676, Apr. 2015.

\bibitem{mmPotential2}
S.~Rangan, T.~S. Rappaport, and E.~Erkip, ``{Millimeter-Wave Cellular Wireless
  Networks: Potentials and Challenges},'' \emph{Proceedings of the IEEE}, vol.
  102, no.~3, pp. 366--385, Feb. 2014.

\bibitem{IRS1}
W.~{Saad}, M.~{Bennis}, and M.~{Chen}, ``{A Vision of 6G Wireless Systems:
  Applications, Trends, Technologies, and Open Research Problems},'' \emph{IEEE
  Network}, vol.~34, no.~3, pp. 134--142, Oct. 2020.

\bibitem{IRS2}
E.~Björnson, L.~Sanguinetti, H.~Wymeersch, J.~Hoydis, and T.~L. Marzetta,
  ``{Massive MIMO is a Reality—What is Next?: Five Promising Research
  Directions for Antenna Arrays},'' \emph{Digital Signal Processing}, vol.~94,
  pp. 3 -- 20, Nov. 2019.

\bibitem{IRS3}
H.~Cho and J.~Choi, ``{Alternating Beamforming With Intelligent Reflecting
  Surface Element Allocation},'' \emph{IEEE Wireless Communications Letters},
  vol.~10, no.~6, pp. 1232--1236, Mar. 2021.

\bibitem{IRSwork1}
M.~Renzo, M.~Debbah, and D.~PhanHuy, ``{Smart Radio Environments Empowered by
  Reconfigurable AI Meta-Surfaces: An idea whose time has come.}''
  \emph{EURASIP Journal on Wireless Com Network}, no. 129, May 2019.

\bibitem{IRSwork2}
C.~{Liaskos}, S.~{Nie}, A.~{Tsioliaridou}, A.~{Pitsillides}, S.~{Ioannidis},
  and I.~{Akyildiz}, ``{A New Wireless Communication Paradigm Through
  Software-Controlled Metasurfaces},'' \emph{IEEE Communications Magazine},
  vol.~56, no.~9, pp. 162--169, Sep. 2018.

\bibitem{IRSwork3}
E.~{Basar}, M.~Renzo, J.~Rosny, M.~{Debbah}, M.~{Alouini}, and R.~{Zhang},
  ``{Wireless Communications Through Reconfigurable Intelligent Surfaces},''
  \emph{IEEE Access}, vol.~7, pp. 116\,753--116\,773, Aug. 2019.

\bibitem{IRSwork4}
H.~Guo, Y.~Liang, J.~Chen, and E.~G. Larsson, ``{Weighted Sum-Rate Maximization
  for Reconfigurable Intelligent Surface Aided Wireless Networks},'' \emph{IEEE
  Transactions on Wireless Communications}, vol.~19, no.~5, pp. 3064--3076,
  Feb. 2020.

\bibitem{UAV1}
Z.~Xiao, P.~Xia, and X.~Xia, ``{Enabling UAV Cellular With Millimeter-Wave
  Communication: Potentials and Approaches},'' \emph{IEEE Communications
  Magazine}, vol.~54, no.~5, pp. 66--73, May 2016.

\bibitem{UAV2}
Y.~Zeng, R.~Zhang, and T.~J. Lim, ``{Wireless Communications With Unmanned
  Aerial Vehicles: Opportunities and Challenges},'' \emph{IEEE Communications
  Magazine}, vol.~54, no.~5, pp. 36--42, May 2016.

\bibitem{UAV3}
E.~P. de~Freitas, T.~Heimfarth, I.~F. Netto, C.~E. Lino, C.~E. Pereira, A.~M.
  Ferreira, F.~R. Wagner, and T.~Larsson, ``{UAV Relay Network to Support WSN
  Connectivity},'' \emph{International Congress on Ultra Modern
  Telecommunications and Control Systems}, pp. 309--314, Dec. 2010.

\bibitem{UAV4}
B.~Ji, Y.~Li, B.~Zhou, C.~Li, K.~Song, and H.~Wen, ``{Performance Analysis of
  UAV Relay Assisted IoT Communication Network Enhanced With Energy
  Harvesting},'' \emph{IEEE Access}, vol.~7, pp. 38\,738--38\,747, Mar. 2019.

\bibitem{UAVBS1}
M.~Alzenad, A.~El-Keyi, F.~Lagum, and H.~Yanikomeroglu, ``{3-D Placement of an
  Unmanned Aerial Vehicle Base Station (UAV-BS) for Energy-Efficient Maximal
  Coverage},'' \emph{IEEE Wireless Communications Letters}, vol.~6, no.~4, pp.
  434--437, May 2017.

\bibitem{UAVBS2}
V.~Saxena, J.~Jaldén, and H.~Klessig, ``{Optimal UAV Base Station Trajectories
  Using Flow-Level Models for Reinforcement Learning},'' \emph{IEEE
  Transactions on Cognitive Communications and Networking}, vol.~5, no.~4, pp.
  1101--1112, Oct. 2019.

\bibitem{UAVrelay1}
S.~Zhang, H.~Zhang, Q.~He, K.~Bian, and L.~Song, ``{Joint Trajectory and Power
  Optimization for UAV Relay Networks},'' \emph{IEEE Communications Letters},
  vol.~22, no.~1, pp. 161--164, Oct. 2018.

\bibitem{UAVrelay2}
X.~Chen, X.~Hu, Q.~Zhu, W.~Zhong, and B.~Chen, ``{Channel Modeling and
  Performance Analysis for UAV Relay Systems},'' \emph{China Communications},
  vol.~15, no.~12, pp. 89--97, Dec. 2018.

\bibitem{UAVenergy1}
Y.~Zeng, Q.~Wu, and R.~Zhang, ``{Accessing From the Sky: A Tutorial on UAV
  Communications for 5G and Beyond},'' \emph{Proceedings of the IEEE}, vol.
  107, no.~12, pp. 2327--2375, Dec. 2019.

\bibitem{UAVenergy2}
Y.~Zeng and R.~Zhang, ``{Energy-Efficient UAV Communication With Trajectory
  Optimization},'' \emph{IEEE Transactions on Wireless Communications},
  vol.~16, no.~6, pp. 3747--3760, Mar. 2017.

\bibitem{Bench1}
Y.~Zeng, J.~Xu, and R.~Zhang, ``{Energy Minimization for Wireless Communication
  With Rotary-Wing UAV},'' \emph{IEEE Transactions on Wireless Communications},
  vol.~18, no.~4, pp. 2329--2345, Mar. 2019.

\bibitem{UAVIRS1}
S.~Li, B.~Duo, X.~Yuan, Y.~Liang, and M.~D. Renzo, ``{Reconfigurable
  Intelligent Surface Assisted UAV Communication: Joint Trajectory Design and
  Passive Beamforming},'' \emph{IEEE Wireless Communications Letters}, vol.~9,
  no.~5, pp. 716--720, Jan. 2020.

\bibitem{UAVIRS2}
D.~Ma, M.~Ding, and M.~Hassan, ``{Enhancing Cellular Communications for UAVs
  via Intelligent Reflective Surface},'' \emph{2020 IEEE Wireless
  Communications and Networking Conference (WCNC)}, pp. 1--6, May 2020.

\bibitem{UAVIRS3}
Z.~Wei, Y.~Cai, Z.~Sun, D.~W.~K. Ng, J.~Yuan, M.~Zhou, and L.~Sun, ``{Sum-Rate
  Maximization for IRS-Assisted UAV OFDMA Communication Systems},'' \emph{IEEE
  Transactions on Wireless Communications}, vol.~20, no.~4, pp. 2530--2550,
  Dec. 2021.

\bibitem{UAVIRS4}
S.~Fang, G.~Chen, and Y.~Li, ``{Joint Optimization for Secure Intelligent
  Reflecting Surface Assisted UAV Networks},'' \emph{IEEE Wireless
  Communications Letters}, vol.~10, no.~2, pp. 276--280, Sep. 2021.

\bibitem{UAVonIRS1}
T.~Shafique, H.~Tabassum, and E.~Hossain, ``{Optimization of Wireless Relaying
  With Flexible UAV-Borne Reflecting Surfaces},'' \emph{IEEE Transactions on
  Communications}, vol.~69, no.~1, pp. 309--325, Oct. 2021.

\bibitem{UAVgroundIRS1}
Y.~Cai, Z.~Wei, S.~Hu, D.~W.~K. Ng, and J.~Yuan, ``{Resource Allocation for
  Power-Efficient IRS-Assisted UAV Communications},'' \emph{2020 IEEE
  International Conference on Communications Workshops (ICC Workshops)}, pp.
  1--7, Jun. 2020.

\bibitem{UAVgroundIRS2}
L.~Ge, P.~Dong, H.~Zhang, J.~Wang, and X.~You, ``{Joint Beamforming and
  Trajectory Optimization for Intelligent Reflecting Surfaces-Assisted UAV
  Communications},'' \emph{IEEE Access}, vol.~8, pp. 78\,702--78\,712, Apr.
  2020.

\bibitem{UAVProbLOS}
A.~Al-Hourani, S.~Kandeepan, and S.~Lardner, ``{Optimal LAP Altitude for
  Maximum Coverage},'' \emph{IEEE Wireless Communications Letters}, vol.~3,
  no.~6, pp. 569--572, Jul. 2014.

\bibitem{CVX1}
M.~Grant and S.~Boyd, ``{CVX}: Matlab software for disciplined convex
  programming, version 2.1,'' \url{http://cvxr.com/cvx}, Mar. 2014.

\end{thebibliography}

\end{document}